\newtheorem{theorem}{Theorem}
\begin{document}
\title{Tight upper bound on the quantum value of Svetlichny operators\\ under local filtering and hidden genuine nonlocality}
\author{Lingyun Sun$^1$}
\author{Li Xu$^1$}
\author{Jing Wang$^1$}
\author{Ming Li$^{1}$}
\author{Shuqian Shen$^1$}
\author{Lei Li$^1$}
\author{Shao-Ming Fei$^{2,3}$}
\affiliation{$^1$College of the Science, China University of
Petroleum, 266580 Qingdao, China\\
$^2$ School of Mathematical Sciences, Capital Normal University,
 100048 Beijing, China\\
$^3$ Max-Planck-Institute for Mathematics in the Sciences, 04103
Leipzig, Germany}

\date{\today}

\begin{abstract}
Nonlocal quantum correlations among the quantum subsystems play essential roles in quantum science.
The violation of the Svetlichny inequality provides sufficient conditions of genuine tripartite nonlocality. We provide tight upper bounds on the maximal quantum value of the Svetlichny operators under local filtering operations, and present a qualitative analytical analysis on the hidden genuine nonlocality for three-qubit systems. We investigate in detail two classes of three-qubit states whose hidden genuine nonlocalities can be revealed by local filtering.
\end{abstract}

\pacs{03.67.Mn,03.65.Ud}
\keywords{Suggested keywords}
\maketitle

\smallskip
\section{\leftline{Introduction}}
As important physical resources, quantum correlations like entanglement play fundamental roles in quantum information processing \cite{QE-2009-RevModPhys.81.865, New_QE-2013-Front.Phys, New_QE-2019-Front.Phys}, with numerous applications in quantum communication protocols with lower complexity \cite
{communication-complexity2002-PhysRevLett.89.197901,communication-complexity2010-RevModPhys.82.665} and higher security \cite{Security2001-PhysRevLett.87.117901,Security2006-PhysRevA.73.012314}.
Two systems $A$ and $B$ are entangled if the measurements on system $A$ does affect the probabilities
of the measurement outcomes from system $B$, and vise versa. For tripartite systems, there exit
correlations so called genuine entanglement \cite{Quantum_entanglement2011-PhysRevLett.106.250404}. A tripartite state may be genuine entangled even if
any pair of the subsystems are separable.

The stronger correlations than entanglement are nonlocal correlations.
Two systems $A$ and $B$ may be locally correlated even if they are entangled, as long as the correlations of their measurement outcomes can be described by classical correlation models of probability.
For a bipartite $\rho$, let $P(ab|XY)$ be the probability of measuring $X$ on subsystem $A$ with outcome $a$ and $Y$ on subsystem $B$ with outcome $b$. If the probability correlation $P(ab|XY)$ can be expressed in the form,
$P(ab|XY)=\sum\limits_{\lambda} {q_{\lambda}}P_{\lambda}(a|X)P_{\lambda}(b|Y)$,
where $\lambda$ is regarded as a shared local hidden variable, $q_{\lambda}\geq0$, and $\Sigma_{\lambda}q(\lambda)=1$, then the state $\rho$ is regarded as locally correlated, and admits
a local hidden-variable (LHV) model.
The bipartite Bell nonlocality \cite{Bell-1964-Phys.1.195,Bell-nonlocality2014-RevModPhys.86.419} can be witnessed by the violation of Bell inequalities \cite{Bell-nonlocality2014-RevModPhys.86.419}.

Similar to the quantum entanglement, the quantum nonlocality becomes subtler for multipartite and high dimensional systems \cite{Def-MNL-2013-PhysRevA.88.014102,New_nonlocilty-FOP2012, New_nonlocilty-FOP2018}.
Let $\rho$ be a tripartite state. Performing local measurements $X$, $Y$ and $Z$ on the subsystems $A$, $B$ and $C$ with outcomes $a$, $b$ and $c$, respectively, we say the state is three local if the corresponding probability correlations $P(abc|XYZ)$ can be written as
\begin{equation}\label{3LHV}
P(abc|XYZ)=\sum\limits_{\lambda} {q_{\lambda}}P_{\lambda}(a|X)P_{\lambda}(b|Y)P_{\lambda}(c|Z),
\end{equation}
where $0\leq q_{\lambda}\leq 1$ and $\Sigma_{\lambda} q_{\lambda}=1$.
Otherwise, the state is called nonthree or full local.
A nonthree local state is said to be hybrid-nonlocal, admitting bi-LHV model, if
\begin{equation}\label{bi-LHV}
P(abc|XYZ)=\sum\limits_{\lambda} {q_{\lambda}}P_{\lambda}(ab|XY)P_{\lambda}(c|Z)
    +\sum\limits_{\mu}{q_{\mu}}P_{\mu}(ac|XZ)P_{\mu}(b|Y)
    +\sum\limits_{\upsilon}{q_{\upsilon}}P_{\upsilon}(bc|YZ)P_{\upsilon}(a|X),
\end{equation}
where $0\leq q_{\lambda},q_{\mu},q_{\upsilon}\leq 1$, and $\Sigma_{\lambda} q_{\lambda}+\Sigma_{\mu} q_{\mu}+\Sigma_{\upsilon} q_{\upsilon}=1$.

If the probability correlation can not be written in form (\ref{bi-LHV}), the state is called genuine tripartite nonlocal. The genuine tripartite nonlocality of a state can be detected by the
Svetlichny inequality (SI) \cite{Svetlichny1987-PhysRevD.35.3066}. The violation of SI is a sufficient condition for the genuine tripartite nonlocality. However, generally it is not easy to verify
such violations. In \cite{LM2017-PhysRevA.96.042323}, the authors presented a tight upper bound for the maximal quantum value of the Svetlichny operator.

For bipartite case, it has been shown that the nonlocality of certain quantum states can be revealed by using local filters before performing a standard Bell test, known as genuine hidden nonlocality \cite{localfilter-2013-PhysRevLett.111.160402}.
In \cite{Max-entanglement-PhysRevA.68.012103}, Verstraete et. al demonstrated that the optimal local filtering operations can maximize certain entanglement measures.
Moreover, quantum properties such as Bell nonlocality and steerability \cite{Steering2007-PhysRevLett.98.140402} of specific quantum states can be revealed by local filtering.
In \cite{lf-CHSH-PhysRevLett.74.2619,PhysLettA.210.151}, the authors considered two-qubit states which do not violate CHSH inequality before, but do violate after performing local filtering operations.
The maximal violation of the CHSH inequality and the lower bound of the maximal violation of V$\acute{e}$rtesi inequality under local filtering operations were computed analytically in \cite{li2017maximal}.
In \cite{Hidden-steerability2019-PhysRevA.99.030101}, Pramanik et al. showed that there exist initially unsteerable bipartite states which show steerability after local filtering.

For a tripartite state $\rho$, under local filtering transformations one gets,
\begin{equation}
\label{Local filtered state}
{\rho }'=\frac{1}{N}\left ( F_{A}\otimes F_{B}\otimes F_{C}\right )\rho\left ( F_{A}\otimes F_{B}\otimes F_{C}\right )^{\dagger},
\end{equation}
where $N={\rm tr}\left[\left (F _{A} \otimes F_{B}\otimes F_{C} \right )\rho \left (F _{A} \otimes F_{B}\otimes F_{C} \right )^{\dagger} \right ]$ is a normalization factor, $F_{A}$, $F_{B}$ and $F_{C}$ are positive operators acting on the local subsystems, respectively.
In \cite{AON2020-PhysRevLett.124.050401}, Tendick et al. discussed the relation between entanglement and nonlocality in the hidden nonlocality scenario, and presented a fully biseparable three-qubit
bound entangled state with a local model for the most general measurements. By using the ${\rm \acute{S}}$liwa's inequality and an iterative sequence of semidefinite programs it is shown that the local model breaks down when suitable local filters are applied, which demonstrates the activation of nonlocality in bound entanglement, as well as that genuine hidden nonlocality does not imply entanglement distillability.

In this paper, we first study the maximal quantum value of the Svetlichny operators after local filtering operations for any three-qubit system.
A tight upper bound for the maximal value of the Svetlichny operators after local filtering is obtained. Then we take the color noised Greenberger-Horne-Zeilinger (GHZ)-class states as examples to illustrate how local filter operations work in nonlocality improvement.
We show that the hidden genuine nonlocalities can be revealed by local filtering
for these classes of three-qubit states.

\section{Tight upper bound on the value of Svetlichny operator under local filtering}
The Svetlichny operator in Svetlichny inequality \cite{Svetlichny1987-PhysRevD.35.3066} reads
\begin{equation}
\label{Svetlichny operator}
\mathcal{S}=A\otimes [(B+B')\otimes C+(B-B')\otimes C']+A'\otimes [(B-B')\otimes C-(B+B')\otimes C'],
\end{equation}
where $A,A',B,B',C$ and $C'$ denote the local observables of the form $G=\vec{g}\cdot \vec{\sigma}=\Sigma_{k=1}^3 g_{k}\sigma_{k}$, $G\!\in\!\{A,A',B,B',C,C'\}$ and $\vec{g}\!\in\!\{\vec{a},\vec{a}\,',\vec{b},\vec{b}\,',\vec{c},\vec{c}\,'\}$, respectively. $\vec{\sigma}=(\sigma_{1},\sigma_{2},\sigma_{3})$ with $\sigma_{i}$, $i=1,2,3$, the standard Pauli matrices. $\vec{g}$ is a three-dimensional real unit vector.

The mean value of the Svetlichny operator for an arbitrary three-qubit state $\rho$ admitting a bi-LHV model satisfies the following inequality \cite{Svetlichny1987-PhysRevD.35.3066},
\begin{equation}\label{SI}
|\langle\mathcal{S}\rangle_{\rho}|\leq 4,
\end{equation}
where $\langle\mathcal{S}\rangle_{\rho}$={\rm tr}$(\mathcal{S}\rho)$.
A state violating the inequality (\ref{SI}) is called genuine three-qubit nonlocal.
It has been shown that the maximal quantum value of the Svetlichny operator for three-qubit systems is upper bounded \cite{LM2017-PhysRevA.96.042323},
\begin{equation}
\label{new upper bound}
\mathcal{Q}(\mathcal{S})\equiv {\rm max}|\langle\mathcal{S}\rangle_{\rho}|\leq4\lambda_{1},
\end{equation}
where $\lambda_{1}$ is the maximal singular value of the matrix $M=(M_{j,ik})$, with $M_{ijk}$={\rm tr}$[\rho(\sigma_{i}\otimes\sigma_{j}\otimes\sigma_{k})],i,j,k=1,2,3.$
The upper bound is tight if the degeneracy of $\lambda_{1}$ is more than 1, and the two degenerate nine-dimensional singular vectors corresponding to $\lambda_{1}$ take the form of $\vec{a}\otimes\vec{c}-\vec{a}\,'\otimes\vec{c}\,'$ and $\vec{a}\otimes\vec{c}\,'+\vec{a}\,'\otimes\vec{c}$.

Let $F_{A}=U\Sigma _{A}U^{\dagger}$, $F_{B}=V\Sigma _{B}V^{\dagger}$ and $F_{C}=W\Sigma _{C}W^{\dagger}$ be the spectral decompositions of the filter operators $F_{A}$, $F_{B}$ and $F_{C}$, respectively, where $U$, $V$ and $W$ are unitary operators. Set
$\delta_{l}=\Sigma _{A}\sigma_{l}\Sigma _{A}$, $\eta_{m} =\Sigma _{B}\sigma_{m}\Sigma _{B}$ and $\gamma_{n}=\Sigma _{C}\sigma_{n}\Sigma _{C}$.
Without loss of generality, we assume that
\begin{center}
$\Sigma _{A}=\begin{pmatrix}
x  & 0\\
0 & 1
\end{pmatrix}$,
$\Sigma _{B}=\begin{pmatrix}
y  & 0\\
0 & 1
\end{pmatrix}$ and
$\Sigma _{C}=\begin{pmatrix}
z  & 0\\
0 & 1
\end{pmatrix}$
\end{center}
with $x,y,z\geq 0$.
Let $X=(X_{m,ln})$ be a matrix with entries given by
\begin{equation}\label{Xmatrix}
X_{lmn}={\rm tr}[\varrho(\delta_{l}\otimes \eta_{m}\otimes\gamma_{n})],~~~l,m,n=1,2,3,
\end{equation}
where $\varrho$ is any state that is locally unitary equivalent to $\rho$.

\begin{theorem}\label{theorem 1}
\rm For the local filtered quantum state ${\rho }'=\frac{1}{N}\left ( F_{A}\otimes F_{B}\otimes F_{C}\right )\rho\left ( F_{A}\otimes F_{B}\otimes F_{C}\right )^{\dagger}$ of a three-qubit $\rho$, the maximal quantum value of the Svetlichny operator $\mathcal{S}$ defined in Eq. (\ref{Svetlichny operator}) satisfies
\begin{equation}
\label{th1-formula}
\mathcal{Q}(\mathcal{S})'={\rm max}\left |\left \langle \mathcal{S} \right \rangle _{{\rho}'}\right |\leq 4{\lambda_{1}}',
\end{equation}
where $\left \langle \mathcal{S} \right \rangle _{{\rho}'}={\rm tr}(\mathcal{S}{\rho}')$, ${\lambda_{1}}'$ is the maximal singular value of the matrix $X/N$, with $X$ defined in Eq. (\ref{Xmatrix}), taking over all quantum states $\varrho$ which are locally unitary equivalent to $\rho$. Equivalently, ${\lambda_{1}}'$ is also the maximal singular value of the matrix ${M}'=({M_{j,ik}}')$, with ${M_{ijk}}'={\rm tr}[{\rho}'(\sigma _{i}\otimes\sigma _{j}\otimes\sigma _{k})]$, $i,j,k=1,2,3.$
\end{theorem}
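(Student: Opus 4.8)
The plan is to reduce the statement to the already-established bound (\ref{new upper bound}) by tracking how the correlation array $M'$ of the filtered state $\rho'$ is related to the array $X$ built from a locally unitary rotated copy of $\rho$ together with the diagonal filter matrices $\Sigma_A,\Sigma_B,\Sigma_C$. The last sentence of the theorem is immediate: $\rho'$ is itself a three-qubit state, so applying (\ref{new upper bound}) to $\rho'$ gives $\mathcal{Q}(\mathcal{S})'=\max|\langle\mathcal{S}\rangle_{\rho'}|\le 4\lambda_1'$ with $\lambda_1'$ the largest singular value of $M'=(M'_{j,ik})$, $M'_{ijk}=\mathrm{tr}[\rho'(\sigma_i\otimes\sigma_j\otimes\sigma_k)]$. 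Hence the real task is to identify this $\lambda_1'$ with the largest singular value of $X/N$.

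Next I would expand $M'_{ijk}$ explicitly. Since $F_A,F_B,F_C$ are positive, hence Hermitian, one has $M'_{ijk}=\frac1N\mathrm{tr}[\rho\,(F_A\sigma_iF_A\otimes F_B\sigma_jF_B\otimes F_C\sigma_kF_C)]$ and $N=\mathrm{tr}[\rho\,(F_A^2\otimes F_B^2\otimes F_C^2)]$. Inserting the spectral decompositions $F_A=U\Sigma_AU^\dagger$, $F_B=V\Sigma_BV^\dagger$, $F_C=W\Sigma_CW^\dagger$ and using the $SU(2)$--$SO(3)$ homomorphism (a global phase may be discarded so that $U,V,W\in SU(2)$), write $U^\dagger\sigma_iU=\sum_l R^{(A)}_{il}\sigma_l$ with $R^{(A)}\in SO(3)$, and similarly for $V,W$. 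Then $F_A\sigma_iF_A=\sum_l R^{(A)}_{il}\,U\delta_lU^\dagger$ with $\delta_l=\Sigma_A\sigma_l\Sigma_A$, and analogously with $\eta_m=\Sigma_B\sigma_m\Sigma_B$, $\gamma_n=\Sigma_C\sigma_n\Sigma_C$. Substituting and setting $\varrho=(U^\dagger\otimes V^\dagger\otimes W^\dagger)\rho(U\otimes V\otimes W)$, which is locally unitary equivalent to $\rho$, yields
\[
M'_{ijk}=\frac1N\sum_{l,m,n}R^{(A)}_{il}R^{(B)}_{jm}R^{(C)}_{kn}\,\mathrm{tr}\!\left[\varrho\,(\delta_l\otimes\eta_m\otimes\gamma_n)\right]=\frac1N\sum_{l,m,n}R^{(A)}_{il}R^{(B)}_{jm}R^{(C)}_{kn}\,X_{lmn},
\]
with $N=\mathrm{tr}[\varrho\,(\Sigma_A^2\otimes\Sigma_B^2\otimes\Sigma_C^2)]$ likewise expressed through $\varrho$.

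Reading this as an identity between the $3\times 9$ arrays $M'=(M'_{j,ik})$ and $X=(X_{m,ln})$ gives $M'=\frac1N\,(R^{(B)})^{T}X\,(R^{(A)}\otimes R^{(C)})$. Since $R^{(A)},R^{(B)},R^{(C)}\in SO(3)$, the matrix $(R^{(B)})^{T}$ and the Kronecker product $R^{(A)}\otimes R^{(C)}$ are orthogonal, and multiplying a real matrix on both sides by orthogonal matrices does not change its singular values (if $B=PXQ$ with $P,Q$ orthogonal then $B^{T}B=Q^{T}X^{T}XQ$ is similar to $X^{T}X$). Therefore $M'$ and $X/N$ have exactly the same singular spectrum, so $\lambda_1'$ equals the largest singular value of $X/N$ for this $\varrho$; taking the maximum over all $\varrho$ locally unitary equivalent to $\rho$ keeps it an upper bound, and combining with the first step gives $\mathcal{Q}(\mathcal{S})'\le 4\lambda_1'$, which is the claim.

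The main obstacle is the index bookkeeping in the middle two steps: pinning down the $SU(2)$--$SO(3)$ correspondence together with the placement of transposes, and checking that the Kronecker/index layout $(M'_{j,ik})$ versus $(X_{m,ln})$ matches so that the sandwiching matrices are genuinely orthogonal and the singular values are truly preserved. The special $A$--$C$ structure of the Svetlichny operator, and the tightness condition in (\ref{new upper bound}), need not be revisited here, as (\ref{new upper bound}) is invoked as a black box applied to $\rho'$.
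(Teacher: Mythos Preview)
Your proposal is correct and follows essentially the same route as the paper: both arguments insert the spectral decompositions of the filters, use the $SU(2)$--$SO(3)$ covering to rewrite $M'_{ijk}$ as an $SO(3)$-rotated version of $X_{lmn}/N$ built from $\varrho=(U^\dagger\otimes V^\dagger\otimes W^\dagger)\rho(U\otimes V\otimes W)$, and then conclude that $M'$ and $X/N$ share the same singular values, so that the bound~(\ref{new upper bound}) applied to $\rho'$ yields~(\ref{th1-formula}). Your explicit matrix identity has the transposes placed slightly differently from the paper's $M'=\frac{1}{N}O_AX(O_B^{T}\otimes O_C^{T})$ (and indeed both formulas are a bit loose about which index is the row), but as you note this is pure bookkeeping and does not affect the singular-value argument.
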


\begin{proof}
The normalization factor $N$ has the following form,
\begin{align*}
 N&={\rm tr}\left [ (U\Sigma_{A}^{2}U^{\dagger}\otimes V\Sigma_{B}^{2}V^{\dagger}\otimes W\Sigma_{C}^{2}W^{\dagger})\rho   \right ] \\
 &={\rm tr}\left [(\Sigma_{A}^{2}\otimes \Sigma_{B}^{2}\otimes \Sigma_{C}^{2})(U^{\dagger}\otimes V^{\dagger}\otimes W^{\dagger})\rho (U\otimes V\otimes W) \right ]\\
 &={\rm tr}\left [ (\Sigma_{A}^{2}\otimes \Sigma_{B}^{2}\otimes \Sigma_{C}^{2})\varrho \right],
\end{align*}
where $\varrho =(U^{\dagger}\otimes V^{\dagger}\otimes W^{\dagger})\rho (U\otimes V\otimes W)$. Since $\rho$ and $\varrho$ are local unitary equivalent, they have the same value of the maximal violation of the SI.

From the double cover relationship \cite{SO(3)-1995-PhysRevA.52.4396,SU(3)-LM-2014-PhysRevA.89.062325} between the special unitary group $SU(2)$ and the special orthogonal group $SO(3)$,
$U\sigma_{_{i}} U^{\dagger}=\sum_{j=1}^{3}O_{ij}\sigma_{j}$,
where $U$ is any given unitary operator and the matrix $O$ with entries $O_{ij}$ belongs to $SO(3)$,
we have
\begin{align}\label{proof of th1}
\notag {M_{ijk}}'&={\rm tr}[{\rho}'(\sigma_{i}\otimes \sigma_{j}\otimes \sigma_{k})]\\
\notag &=\frac{1}{N}{\rm tr}\left [ \left (F_{A}\otimes F_{B}\otimes F_{C} \right )\rho\left ( F_{A}^{\dagger}\otimes F_{B}^{\dagger}\otimes F_{C}^{\dagger}\right )(\sigma_{i}\otimes \sigma_{j}\otimes \sigma_{k}) \right ]\\
\notag &=\frac{1}{N}{\rm tr}\left [ \rho (U \Sigma _{A}U^{\dagger}\sigma_{i}U \Sigma _{A}U^{\dagger}\otimes V \Sigma _{B}V^{\dagger}\sigma_{j}V \Sigma _{B}V^{\dagger}\otimes W \Sigma _{C}W^{\dagger}\sigma_{k}W \Sigma _{C}W^{\dagger})  \right ]\\
\notag &=\frac{1}{N}\sum_{l,m,n}{\rm tr}\left [ (U^{\dagger}\otimes V^{\dagger}\otimes W^{\dagger})\rho (U\otimes V\otimes W)(\Sigma_{A}O_{il}^{A}\sigma_{l} \Sigma_{A}\otimes \Sigma_{B}O_{jm}^{B}\sigma_{m}\Sigma_{B}\otimes \Sigma_{C}O_{kn}^{C}\sigma_{n}\Sigma_{C})\right ]\\
\notag &=\frac{1}{N}\sum_{l,m,n}O_{il}^{A}O_{jm}^{B}O_{kn}^{C}{\rm tr}\left [ \varrho (\Sigma_{A}\sigma_{l} \Sigma_{A}\otimes \Sigma_{B}\sigma_{m}\Sigma_{B}\otimes \Sigma_{C}\sigma_{n}\Sigma_{C})  \right ]\\
\notag &=\frac{1}{N}\sum_{l,m,n}O_{il}^{A}O_{jm}^{B}O_{kn}^{C}{\rm tr}\left [ \varrho (\delta_{l}\otimes\eta_{m}\otimes\gamma_{n}) \right ]\\
&=\frac{1}{N}\left [ O_{A}X\left ( O_{B}^{T}\otimes O_{C}^{T} \right ) \right ]_{ijk}.
\end{align}

Therefore, we have ${M}'=\left [ O_{A}X\left ( O_{B}^{T}\otimes O_{C}^{T}\right ) \right]/N$, and
\begin{equation}
\left ( {M}' \right )^{\dagger}{M}'=\frac{1}{N^{2}}\left ( O_{B}\otimes O_{C} \right )X^{\dagger}O_{A}^{\dagger}O_{A}X\left ( O_{B}\otimes O_{C} \right )^{\dagger}
=\frac{1}{N^{2}}\left (O_{B}\otimes O_{C}  \right )X^{\dagger}X\left ( O_{B}\otimes O_{C} \right )^{\dagger}.
\end{equation}
By noticing the orthogonality of the operator $O_{B}\otimes O_{C}$, one obtains that ${M}'\left ( {M}' \right )^{\dagger}$ has the same eigenvalues as $X^{\dagger}X/N^{2}$. Hence, $M'$ has the same singular values as $X/N$. Let $\vec{v}$ be a nine-dimensional singular vector of the matrix $X/N$. Then $(O_{B}\otimes O_{C})\vec{v}$ is a nine-dimensional singular vector of the matrix $M'$.
\end{proof}

\section{\leftline{Tightness of the upper bound and hidden genuine nonlocality}}
As applications of the Theorem 1, we consider the activation of the hidden genuine nonlocality
of three-qubit systems. We present two classes of three-qubit states which admit bi-LHV model before
local filtering, but display genuine nonlocality after local filtering.

Let us begin with the two-qubit isotropic states
\begin{equation}
{\chi}_{iso}(p)=p|\phi\rangle\langle\phi|+(1-p)\frac{I_{4}}{4},
\end{equation}
where $|\phi\rangle=\frac{1}{\sqrt{2}}(|00\rangle+|11\rangle)$ is the maximally entangled state, $I_{4}$ denotes the $4\times 4$ identity matrix, and $0\leq p\leq 1$.
The state ${\chi}_{iso}(p)$ is fully local for $0\leq p\leq 0.4167$ and the related local model can be generalized from isotropic states to the following states \cite{LHV-PhysRevLett.99.040403},
\begin{equation}
\hat{\chi}(p,\theta)=p|\psi_{s}\rangle\langle\psi_{s}|+(1-p)\frac{I_{4}}{4},
\end{equation}
where $|\psi_{s}\rangle={\rm cos}\,\theta|00\rangle+{\rm sin}\,\theta|11\rangle$, and $0\leq\theta\leq\pi/4$.

Consider the following states,
\begin{equation}
\chi(p,\theta)=p|\psi_{s}\rangle\langle\psi_{s}|+(1-p)|0\rangle\langle0|\otimes\frac{I_{2}}{2},
\end{equation}
where $I_{2}$ denotes the $2\times 2$ identity matrix, $0\leq\theta\leq\pi/4$, and $0\leq p\leq 1$.
Following the protocol presented in \cite{localfilter-2013-PhysRevLett.111.160402}, we have that
$\chi(p,\pi/4)$ admits an LHV model as $\hat{\chi}(p,\pi/4)$.
Namely, $\chi(p,\theta)$, with $0\leq\theta\leq\pi/4$, admits an LHV model for $0\leq p\leq 0.4167$.

Any bipartite local states can be converted to multipartite states with a bi-local model \cite{construction-PhysRevLett.115.030404}.
Following the construction given in \cite{construction-PhysRevLett.115.030404}, we can
transform the states $\chi(p,\theta)$ into the mixture of the colored noise and the three-qubit GHZ-class states,
\begin{equation}\label{state1}
  \rho_{\chi}(p,\theta)=p|\Psi_{s}\rangle\langle \Psi_{s}|+(1-p)|00\rangle\langle 00|\otimes\frac{I_{2}}{2},
\end{equation}
where $|\Psi_{s}\rangle={\rm cos}\,\theta|000\rangle+{\rm sin}\,\theta|111\rangle$. Analogously, the state $\rho_{\chi}(p,\theta)$ admits an bi-LHV model for $0\leq p\leq 0.4167$.
In the following, we set $\theta=\pi/8$ and consider the activation of the hidden genuine nonlocality
of $\rho_{\chi}(p,\pi/8)$ under local filtering.

Firstly, based on the genuine multipartite concurrence of three-qubit \emph{X} states \cite{X.GME-2012-PhysRevA.86.062303}, one can show that $\rho_{\chi}(p,\pi/8)$ is genuine multipartite entangled for $0<p\leq 1$.
Secondly, the quantum state $\rho_{\chi}(p,\pi/8)$ attains the upper bound on the mean values of the SI operators, but never violates the SI, which can be seen from the matrix $M$ of $\rho_{\chi}(p,\pi/8)$ defined in (\ref{new upper bound}),
\begin{equation}
M=\left(
    \begin{array}{ccccccccc}
   \displaystyle \frac{\sqrt{2}p}{2} & 0 & 0 & 0 & \displaystyle -\frac{\sqrt{2}p}{2} & 0 & 0 & 0 & 0\\
  0 & \displaystyle -\frac{\sqrt{2}p}{2} & 0 & \displaystyle -\frac{\sqrt{2}p}{2} & 0 & 0 & 0 & 0 & 0\\
      0 & 0 & 0 & 0 & 0 & 0 & 0 & 0 & \displaystyle \frac{\sqrt{2}p}{2}
    \end{array}
  \right).
\end{equation}
The singular values of the matrix $M$ are $p$, $p$ and $\displaystyle \frac{\sqrt{2}p}{2}$. Hence, $\lambda_{1}=p$. The upper bound of the maximal mean value of the Svetlichny operator is
$\mathcal{Q}(\mathcal{S})={\rm max}|\langle\mathcal{S}\rangle_{\rho_{\chi}(p,\pi/8)}|\leq 4\lambda_{1}=4p$.
In order to assure that the upper bound can be used to determine the violation of the SI, one needs to prove that the bound is attained for the state $\rho_{\chi}(p,\pi/8)$, which requires that two nine-dimensional singular vectors have the forms $\vec{a}\otimes\vec{c}-\vec{a}\,'\otimes\vec{c}\,'$ and
$\vec{a}\otimes\vec{c}\,'+\vec{a}\,'\otimes\vec{c}$ exist.
We select the two singular vectors corresponding to the degenerated $\lambda_{1}$ as $\vec{v_{1}}=(1,0,0,0,-1,0,0,0,0)^{T}=(1,0,0)^{T}\otimes (1,0,0)^{T}-(0,-1,0)^{T}\otimes (0,-1,0)^{T}$ and $\vec{v_{2}}=(0,-1,0,-1,0,0,0,0,0)^{T}=(1,0,0)^{T}\otimes (0,-1,0)^{T}+(0,-1,0)^{T}\otimes (1,0,0)^{T}$.
Setting $\vec{a}=(1,0,0)^{T}$, $\vec{a}\,'=(0,-1,0)^{T}$, $\vec{c}=(1,0,0)^{T}$ and $\vec{c}\,'=(0,-1,0)^{T}$, and choosing $\vec{b}$ and $\vec{b}\,'$ to be suitable unit vectors of proper measurement directions of $B$ and $B'$ in Eq. (\ref{Svetlichny operator}), we can show that
the upper bound $4p$ is attained for $\rho_{\chi}(p,\pi/8)$. Nevertheless, the violation of the SI never happens for the quantum state $\rho_{\chi}(p,\pi/8)$ as expected.

Now we consider the local filtering of $\rho_{\chi}(p,\pi/8)$. By direct computation, we have the matrix $\tilde{M}=(\tilde{M}_{m,ln})=({\rm tr}[\rho_{\chi}(p,\pi/8)(\delta_{l}\otimes\eta_{m}\otimes\gamma_{n})])$, $l,m,n=1,2,3$,
\begin{equation}
\tilde{M}=\left(
    \begin{array}{ccccccccc}
      \displaystyle\frac{\sqrt{2}pxyz}{2} & 0 & 0 & 0 & \displaystyle-\frac{\sqrt{2}pxyz}{2} & 0 & 0 & 0 & 0\\
      0 & \displaystyle -\frac{\sqrt{2}pxyz}{2} & 0 & \displaystyle -\frac{\sqrt{2}pxyz}{2} & 0 & 0 & 0 & 0 & 0\\
      0 & 0 & 0 & 0 & 0 & 0 & 0 & 0 & D\\
    \end{array}
  \right),
\end{equation}
where $D=\displaystyle -\frac{2-\sqrt{2}}{4}p-\frac{1-p}{2}x^{2}y^{2}+\frac{2+\sqrt{2}p}{4}x^{2}y^{2}z^{2}$.
The singular values of the matrix $\tilde{M}$ are $pxyz$, $pxyz$ and $D.$
Since $\rho_{\chi}(p,\pi/8)$ and $\varrho_{\chi}(p,\pi/8)$ are locally unitary equivalent, we conclude that ${pxyz}/{N}$, ${pxyz}/{N}$ and ${D}/{N}$ are the singular values of the matrix $X/N$, where $N={\rm tr} [\rho_{\chi}(p,\pi/8)(\Sigma^{2}_{A}\otimes\Sigma^{2}_{B}\otimes\Sigma^{2}_{C})]
=\displaystyle \frac{2-\sqrt{2}}{4}p+\frac{1-p}{2}x^{2}y^{2}+\frac{2+\sqrt{2}p}{4}x^{2}y^{2}z^{2}$, which
are also the singular values of the matrix $M'$ in Theorem 1.
The maximal singular value $\lambda_{1}'$ is ${pxyz}/{N}$ for given $p$, with ${pxyz}/{N}>{D}/{N}$.
Then the upper bound of the maximal value of the Svetlichny operator is given by
\begin{equation}
\mathcal{Q}(\mathcal{S})'={\rm max}|\langle\mathcal{S}\rangle_{\rho_{\chi}'(p,\pi/8)}|\leq 4\lambda_{1}'=\displaystyle\frac{4pxyz}{N}.
\end{equation}

The matrix $X/N$ also has the singular vectors $\vec{v_{1}}$ and $\vec{v_{2}}$ with respect to the singular value $\lambda_{1}'$. According to Theorem 1, the singular vectors of $M'$ corresponding to $\lambda_{1}'$ are $(O_{B}\otimes O_{C})\vec{v_{1}}=O_{B}\,\vec{a}\otimes O_{C}\,\vec{c}-O_{B}\,\vec{a}\,'\otimes O_{C}\,\vec{c}$ and $(O_{B}\otimes O_{C})\vec{v_{2}}=O_{B}\,\vec{a}\otimes O_{C}\,\vec{c}\,'+O_{B}\,\vec{a}\,'\otimes O_{C}\,\vec{c}$,
where $O_{B}$ and $O_{C}$ belong to $SU(3)$.
The upper bound is saturated as the singular vectors can be written in required decomposition forms.
The SI is violated if and only if $\lambda_{1}'=\displaystyle \frac{pxyz}{N}>1$.
Maximizing $\lambda_{1}'$ under the restriction $\displaystyle\frac{pxyz}{N}>\frac{D}{N}$,
we obtain that the quantum states $\rho_{\chi}'(p,\pi/8)$ violate the SI and are genuine three-qubit
nonlocal for $0.3697\leq p\leq 1$, although $\rho_{\chi}(p,\pi/8)$ is bi-local, see Fig. 1.

\begin{figure}[htbp]
 \centering
  \includegraphics[width=15cm,height=5cm]{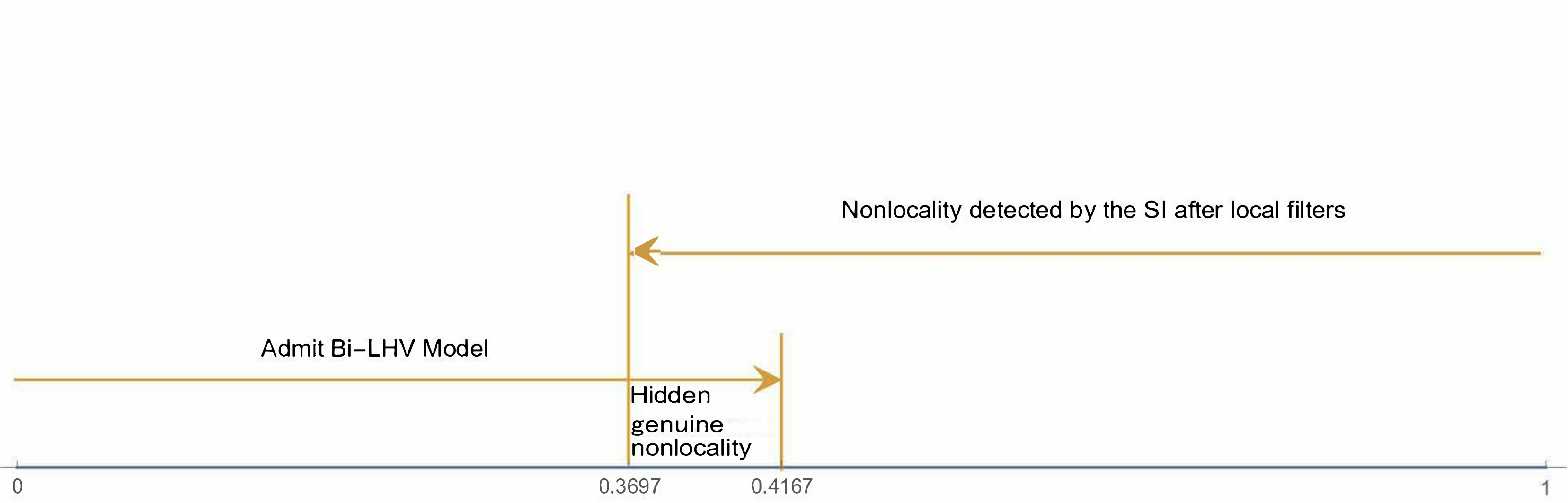}
  \caption{The state $\rho_{\chi}(p,\pi/8)$ admits a bi-local hidden model for $0\leq p\leq 0.4167$ and never violates SI for $0\leq p\leq 1$ as the upper bound in Theorem 1 is saturated. The locally filtered state shows the genuine nonlocality for $0.3697\leq p\leq 1$. The hidden genuine tripartite nonlocality is revealed for $0.3697\leq p\leq 0.4167$.}
\end{figure}

Now we consider another class of three-qubit states.
Consider the mixture of the three-qubit GHZ states and the colored noise,
\begin{equation}\label{state2}
\rho=p|{\rm GHZ}\rangle\langle {\rm GHZ}|+\frac{1-p}{4}\widetilde{I_{0}}\otimes I_{4},
\end{equation}
where $|{\rm GHZ}\rangle=\frac{1}{\sqrt{2}}(|000\rangle+|111\rangle)$, $\widetilde{I_{0}}=\left(
                           \begin{array}{cc}
                             1 & 0 \\
                             0 & 0 \\
                           \end{array}
                         \right)$ and $0\leq p\leq 1$.
The state $\rho$ is genuine multipartite entangled for $0<p\leq 1$ by using the criterion given in \cite{X.GME-2012-PhysRevA.86.062303}. The corresponding matrix $M$,
\begin{equation}
M=\left(
    \begin{array}{ccccccccc}
      p & 0 & 0 & 0 & -p & 0 & 0 & 0 & 0\\
      0 & -p & 0 & -p & 0 & 0 & 0 & 0 & 0\\
      0 & 0 & 0 & 0 & 0 & 0 & 0 & 0 & 0\\
    \end{array}
  \right),
\end{equation}
has singular values $\sqrt{2}p$, $\sqrt{2}p$ and 0, i.e., $\lambda_{1}=\sqrt{2}p$.
The upper bound of the maximal value of the Svetlichny operator satisfies
$\mathcal{Q}(\mathcal{S})=\rm{max}|\langle\mathcal{S}\rangle_{\rho}|\leq 4\lambda_{1}=4\sqrt{2}p$.
This upper bound is saturated since two nine-dimensional singular vectors of the forms, $\vec{a}\otimes\vec{c}-\vec{a}\,'\otimes\vec{c}\,'$ and $\vec{a}\otimes\vec{c}\,'+\vec{a}\,'\otimes\vec{c}$ can be found in the following way. Take the singular vectors corresponding to $\lambda_{1}$ to be $\vec{v_{1}}=(1,0,0,0,-1,0,0,0,0)^{T}$ and $\vec{v_{2}}=(0,-1,0,-1,0,0,0,0,0)^{T}$, which have exactly the forms, $(1,0,0)^{T}\otimes (1,0,0)^{T}-(0,-1,0)^{T}\otimes (0,-1,0)^{T}$ and
$(1,0,0)^{T}\otimes (0,-1,0)^{T}+(0,-1,0)^{T}\otimes (1,0,0)^{T}$, respectively. By defining
$\vec{a}=(1,0,0)^{T}$, $\vec{a}\,'=(0,-1,0)^{T}$, $\vec{c}=(1,0,0)^{T}$ and $\vec{c}\,'=(0,-1,0)^{T}$,
and selecting suitable $\vec{b}$ and $\vec{b}\,'$, the upper bound is attained. Therefore, the state $\rho$ in Eq. (\ref{state2}) violates the SI if and only if $0.707107<p\leq 1$.

The matrix $\tilde{M}=(\tilde{M}_{m,ln})=({\rm tr}[\rho(\delta_{l}\otimes\eta_{m}\otimes\gamma_{n})])$, $l,m,n=1,2,3$, has the form,
\begin{equation}
\tilde{M}=\left(
    \begin{array}{ccccccccc}
      pxyz & 0 & 0 & 0 & -pxyz & 0 & 0 & 0 & 0\\
      0 & -pxyz & 0 & -pxyz & 0 & 0 & 0 & 0 & 0\\
      0 & 0 & 0 & 0 & 0 & 0 & 0 & 0 & D
    \end{array}
  \right),
\end{equation}
where $D=-\frac{1}{2}p+\frac{1-p}{4}x^{2}-\frac{1-p}{4}x^{2}y^{2}-\frac{1-p}{4}x^{2}z^{2}
+\frac{1+p}{4}x^{2}y^{2}z^{2}.$
The singular values of $\tilde{M}$ are $\sqrt{2}pxyz,$ $\sqrt{2}pxyz$ and $D.$
Due to the local unitary equivalence between $\rho$ and $\varrho$,
the singular values of the matrix $X/N$ are
${\sqrt{2}pxyz}/{N}$, ${\sqrt{2}pxyz}/{N}$ and ${D}/{N}$, where $N={\rm tr}[\rho(\Sigma^{2}_{A}\otimes\Sigma^{2}_{B}\otimes\Sigma^{2}_{C})]
=\frac{1}{2}\,p+ \frac{1-p}{4}x^{2}+\frac{1-p}{4}x^{2}y^{2}+\frac{1-p}{4}x^{2}z^{2}+\frac{1+p}{4}x^{2}y^{2}z^{2}.$
According to Theorem 1, these values are also the singular values of the matrix $M'$.
We have $\lambda_{1}'=\frac{\sqrt{2}pxyz}{N}$ for $\frac{\sqrt{2}pxyz}{N}>\frac{D}{N}$.
The matrix $X/N$ also has the same singular vectors $\vec{v_{1}}$ and $\vec{v_{2}}$ with respect to $\lambda_{1}'$. The singular vectors of $M'$ with respect to $\lambda_{1}'$ are $(O_{B}\otimes O_{C})\vec{v_{1}}$ and $(O_{B}\otimes O_{C})\vec{v_{2}}$, namely,
$O_{B}\,\vec{a}\otimes O_{C}\,\vec{c}-O_{B}\,\vec{a}\,'\otimes O_{C}\,\vec{c}\,'$ and  $O_{B}\,\vec{a}\otimes O_{C}\,\vec{c}\,'+O_{B}\,\vec{a}\,'\otimes O_{C}\,\vec{c},$
with $O_{B}$ and $O_{C}$ belonging to $SU(3)$.
Hence the upper bound is also saturated for the locally filtered state.
Therefore, the state violates the SI if and only if $\lambda_{1}'=\frac{\sqrt{2}pxyz}{N}>1$.
Then the upper bound of the maximal value of the Svetlichny operator satisfies
\begin{equation}
  \mathcal{Q}(\mathcal{S})'={\rm max}|\langle\mathcal{S}\rangle_{\rho}|\leq 4\lambda_{1}'=\displaystyle \frac{4\sqrt{2}pxyz}{N}.
\end{equation}
Based on the above analysis, the genuine nonlocality of the quantum state $\rho'$ is detected by the SI for $0.3334\leq p\leq 1$. Therefor, the hidden genuine nonlocality of $\rho$ is revealed by local filtering operations for $0.3334\leq p\leq 0.7071$, see Fig. 2.

\begin{figure}[htbp]
\label{figure2}
  \centering
  \includegraphics[width=3in]{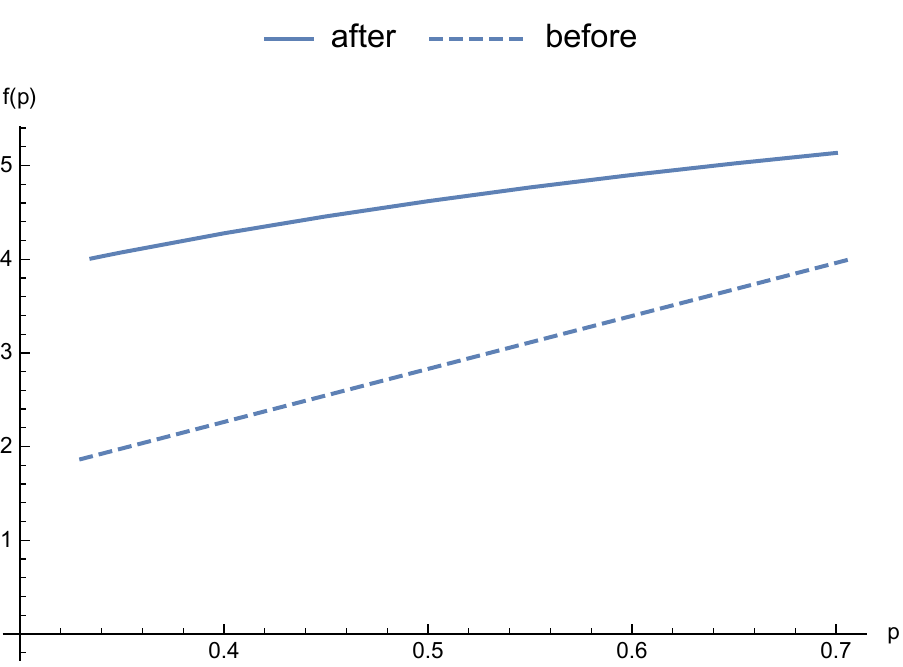}
  \caption{
  Denote $f(p)$ the maximal value of $\mathcal{Q}(\mathcal{S})$:
  Max$|\langle\mathcal{S}\rangle_{\rho}|$ (dashed line) max$|\langle\mathcal{S}\rangle_{\rho'}|$ (solid line). As the upper bound in Theorem 1 is saturated, for $0.3334\leq p\leq 0.7071$ the state $\rho=p\,|{\rm GHZ}\rangle\langle {\rm GHZ}|+\frac{1-p}{4}\widetilde{I_{0}}\otimes I_{4}$ does not violate the SI, but its locally filtered state $\rho'$ shows genuine nonlocality.}
\end{figure}

\section{\leftline{Conclusions and Discussions}}
\label{sec::Conclusions and Discussions}
We have presented a qualitative analytical analysis of the hidden genuine nonlocality for three-qubit systems by providing a tight upper bound on the maximal quantum value of the Svetlichny operators under local filtering operations. The tightness of the upper bounds have been investigated through detailed color noised quantum states. We have presented two classes of three-qubit states whose hidden genuine nonlocalities can be revealed by local filtering. Our results give rise to an operational method in investigating the genuine nonlocality for three-qubit mixed states.
Moreover, the method presented in this paper can also be used in optimizing the maximal quantum violations of other Bell-type inequalities for tripartite or multipartite quantum systems under local filtering.

\vspace{-1mm}  
\section*{Acknowledgment}
\vspace{-1mm}

\noindent{\bf Acknowledgments}\, \, This work is supported by NSFC (11775306, 11701568,
11675113), the Fundamental Research Funds for the Central
Universities (18CX02035A, 18CX02023A, 19CX02050A), Beijing Municipal Commission of Education under Grant No. KZ201810028042, Beijing Natural Science Foundation (Z190005), and Academy for Multidisciplinary Studies, Capital Normal University.

\end{document}